\def\BibTeX{{\rm B\kern-.05em{\sc i\kern-.025em b}\kern-.08em
    T\kern-.1667em\lower.7ex\hbox{E}\kern-.125emX}}
\pgfplotsset{compat=newest}
\pgfplotsset{compat=1.10}
\pgfplotsset{width=10cm,compat=1.9}
\newcommand{\Prob}{\mathbb{P}}
\newcommand{\bbu}{\mathbf{u}}
\newcommand{\normal}{\mathcal{N}}
\newcommand{\one}{\mathbf{1}}
\newtheorem{lemma}{Lemma}
\newtheorem{theorem}{Theorem}
\begin{document}

\title{
Sharp asymptotics on the compression of \\
two-layer neural networks

\thanks{
%
The work of M.H.A. and R.P. was performed while at ISTA. M.H.A., S.B., M.M., and R.P. were partially supported by the 2019 Lopez-Loreta Prize. The work of S.R. was supported by the MOST grant 110-2221-E-A49 -052.}
}

\author{
Mohammad Hossein Amani,
\textsuperscript{1},
Simone Bombari,
\textsuperscript{2},
Marco Mondelli,
\textsuperscript{2},
Rattana Pukdee, and 
\textsuperscript{3},
Stefano Rini
\textsuperscript{4}.\\
\textsuperscript{1} 
{EPFL, Switzerland, email: \texttt{mh.amani1998@gmail.com}}\\
\textsuperscript{2}
{ISTA, Austria, email:  \texttt{\{simone.bombari marco.mondelli\}@ist.ac.at} } \\
\textsuperscript{3}
{CMU, USA, email: \texttt{rpukdee@andrew.cmu.edu} } \\
\textsuperscript{4}
{NYCU, Taiwan, email: \texttt{stefano.rini@nycu.edu.tw}}
}

\maketitle

\let\thefootnote\relax\footnotetext{To appear at the IEEE Information Theory Workshop (ITW) 2022, Mumbai, India.
}

\begin{abstract}
In this paper, we study the compression of a \emph{target} two-layer neural network with $N$ nodes into a \emph{compressed} network with $M<N$ nodes. 
More precisely, we consider the setting in which the weights of the target network are i.i.d. sub-Gaussian, and we minimize the population $L_2$ loss between the outputs of the target and of the compressed network, under the assumption of Gaussian inputs. 
By using tools from high-dimensional probability, we show that this non-convex problem can be simplified when the target network is sufficiently over-parameterized, and provide the error rate of this approximation as a function of the input dimension and $N$.
In this mean-field limit,
the simplified objective, as well as the optimal weights of the compressed network, does not depend on the realization of the target network, but only on expected scaling factors.
%
Furthermore, for networks with ReLU activation, we conjecture that the optimum of the simplified optimization problem is achieved by taking weights on the Equiangular Tight Frame (ETF), while the scaling of the weights and the orientation of the ETF depend on the parameters of the target network.
Numerical evidence is provided to support this conjecture. 
%
\end{abstract}

\section{Introduction}
The compression of Deep Neural Networks (DNNs) plays a crucial role in many modern-day applications in which inference has to be performed at minimal computational cost.
This occurs, for instance, in the IoT paradigm: here, classic DNN tasks (e.g., image classification, speech recognition, and anomaly detection) have to be implemented over diverse platforms with limited computational capabilities. 
In this scenario, large DNNs trained over vast datasets have to be severely compressed to fit the current hardware limitations.
Several empirical studies have shown that over-parameterized DNNs are often highly redundant and, hence, 
smaller sub-networks can be identified which perform close to the original DNN \cite{gale2019state}.
This observation has motivated the emergence of a host of different approaches to identify such sub-networks 
\cite{hoefler2021sparsity}.
Despite the very promising empirical results, the ultimate performance of DNN compression is yet to be characterized. 
%
This problem can be summarized as the following question:
\begin{displayquote}
\emph{
 What is the minimum performance loss that can be guaranteed when compressing any network of $N$ nodes into a smaller network of $M<N$ nodes?}
\end{displayquote}

\noindent
{\bf Main contributions.} In this paper, we address the above question
by investigating the 
limiting performance when approximating
a two-layer \emph{target} network with $N$ neurons via a two-layer \emph{compressed} network with $M$ neurons.
More precisely, we assume that \emph{(i)} the input features are Gaussian, and that \emph{(ii)} the network weights are i.i.d. 
sub-Gaussian.
Then, we consider the problem of determining the parameters of the compressed network which minimize the expected $L_2$ distance (population loss) between the outputs of the target and compressed network. 
This optimization can be divided into the two sub-problems: that of determining the \emph{amplitudes} and the \emph{directions} of the weights. 
While finding the amplitudes boils down to linear regression, finding the directions 
is  challenging.
Thus, by using tools from high-dimensional probability, we show that this latter problem can be simplified in the asymptotic regime in which $N, d\gg 1$.
We highlight that the simplified objective and the optimal parameters of the pruned network do not depend on the realization of the target network.
Furthermore, if the activation function is ReLU, we formulate the \emph{Equiangular Tight Frame (ETF) conjecture}, that is, we conjecture that the optimal choice of weights on the compressed network forms an ETF (see Fig. \ref{fig:sphere}), while  the orientation of the ETF and the scaling of the neuronal outputs depends on the 
parameters of the target network. We also provide numerical evidence supporting our conjecture.

\noindent {\bf Related work.} 
%
Early work used the second order Taylor approximation of the cost function to determine the relevance of the network weights \cite{lecun1989optimal,hassibi1992second}. 
Using this change as a measure of salience, weights corresponding to low salience are pruned from the network. 
The magnitude of the weights 
has also been considered as a measure of saliency \cite{han2015learning,see2016compression,narang2017exploring}.
The main advantage of these methods relies on their
computational efficiency. 
The so called ``lottery ticket hypothesis'' suggests that pruning should be performed in a structured manner \cite{frankle2018lottery},
and sparsification techniques which promote a structured form of sparsity have been considered e.g. in \cite{han2016eie}. 
In \cite{ye2020good}, the authors consider a ``greedy forward selection'' strategy in which 
the sub-network is obtained by successively adding (with replacement) neurons to an empty network. In \cite{kim2020neuron}, pruning is performed via
``neuron merging'', i.e., similar neurons are identified by the cosine similarity of their weights and they are successively merged.

\noindent
{\bf Notation.}
We  adopt the short-hands  $[m:n] \triangleq \{m, \ldots, n\}$
 and  $[n] \triangleq [1:n]$. 
 We let $\one$ be the all-1 vector, and $\Iv$ the identity matrix. 
%
%
%
%
Given a vector $v$, $\|v\|_2$ denotes its $L_2$ norm.
Given a matrix $M$, $\|M\|_{\rm op}$ denotes its operator norm, $\|M\|_{\Fsf}$ the Frobenius norm,  and $M^+$ its 
Moore-Penrose pseudo-inverse. 
The unit sphere of dimension $d$ is indicated as $\Scal^{d-1}$. The Gaussian distribution with mean $\muv$ and covariance matrix $\Cv$ is denoted as
$\Ncal(\muv,\Cv)$. Given a sub-Gaussian random variable $w$, its sub-Gaussian norm is defined as $\|w\|_{\psi_2} = \inf \{ t>0 \,\,: \,\,\mathbb E[\exp(w^2/t^2)] \le 2 \}$. Given a random vector $\wv\in\mathbb R^d$, its sub-Gaussian norm is defined as $\|\wv\|_{\psi_2} := \sup_{\uv \,:\, \|\uv\|_2=1} \|\uv^\intercal \wv\|_{\psi_2}$.
%


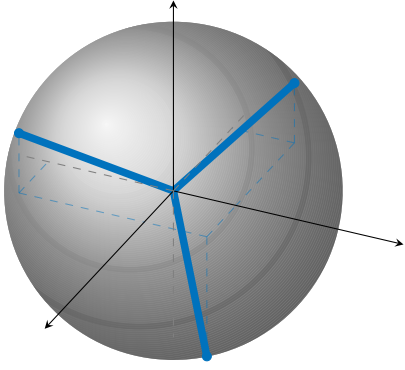
\begin{figure}
\centering  
\begin{tikzpicture}[tdplot_main_coords, scale = 2.25]

\definecolor{mycolor3}{rgb}{0.00000,0.49804,0.00000}%
\definecolor{vamsired}{rgb}{0.63529,0.07843,0.18431} 
\definecolor{vamsiblue}{rgb}{0.00000,0.44706,0.74118} 
\definecolor{vamsigreen}{rgb}{0.00000,0.49804,0.00000} 
\definecolor{vamsiorange}{rgb}{0.87059,0.49020,0.00000} 
\definecolor{mycolor5}{rgb}{0.00000,0.44700,0.74100} %
\definecolor{mycolor6}{rgb}{0.74902,0.00000,0.74902} %

\shade[ball color = lightgray,
    opacity = 0.5
] (0,0,0) circle (1cm);




\coordinate (P1) at  (-0.81649658, 0.40824829, 0.40824829);
\coordinate (P2) at  (0.40824829, - 0.81649658, 0.40824829);
\coordinate (P3) at  (0.40824829, 0.40824829,  -0.81649658);

\draw[fill = vamsiblue, vamsiblue ] (P1) circle (0.75pt);
\draw[fill = vamsiblue,vamsiblue] (P2) circle (0.75pt);
\draw[fill = vamsiblue,vamsiblue] (P3) circle (0.75pt);

\draw[vamsiblue, line cap=round, line width = 1mm, line join=round, ] (0,0,0) -- (P1);
\draw[vamsiblue, line cap=round, line width = 1mm ] (0,0,0) -- (P2);
\draw[vamsiblue, line cap=round, line width = 1mm  ] (0,0,0) -- (P3);

\draw[thin, dashed,vamsiblue, opacity=0.5 ] (-0.81649658, 0.40824829, 0) --  (-0.816 49658, 0, 0);
\draw[thin, dashed,vamsiblue, opacity=0.5] (-0.81649658, 0.40824829, 0) --  (0, 0.40824829, 0);
\draw[thin, dashed,vamsiblue, opacity=0.5] (-0.81649658, 0.40824829, 0) -- (P1);

\draw[thin, dashed,vamsiblue, opacity=0.5 ] (0.40824829, - 0.81649658, 0) --  (0.40824829, 0, 0);
\draw[thin, dashed,vamsiblue, opacity=0.5] (0.40824829, - 0.81649658, 0) --  (0, - 0.81649658, 0);
\draw[thin, dashed,vamsiblue, opacity=0.5] (0.40824829, - 0.81649658, 0) -- (P2);

\draw[thin, dashed,vamsiblue, opacity=0.5 ] (0.40824829, 0.40824829, 0) --  (0.40824829, 0, 0);
\draw[thin, dashed,vamsiblue, opacity=0.5] (0.40824829, 0.40824829, 0) --  
(0, 0.40824829, 0);
\draw[thin, dashed,vamsiblue, opacity=0.5] (0.40824829, 0.40824829, 0) -- (P3);


 
\draw[-stealth] (0,0,0) -- (1.80,0,0);
\draw[-stealth] (0,0,0) -- (0,1.50,0);
\draw[-stealth] (0,0,0) -- (0,0,1.30);
\draw[dashed, gray] (0,0,0) -- (-1,0,0);
\draw[dashed, gray] (0,0,0) -- (0,-1,0);
\draw[dashed, gray] (0,0,0) -- (0,0,-1);

\end{tikzpicture}





 
 
    \caption{A representation of the equiangular tight frame  (ETF).}
    \label{fig:sphere}
    \vspace{-0.5cm}
\end{figure}

\section{Problem formulation}
\label{sec:Problem Formulation}

Let the \emph{target} network $\Wsf$ be specified by \emph{(i)} a set of \emph{weight vectors} $\{ \wv_n \}_{n\in [N]}$  with $\wv_n \in \Scal^{d-1}$, and \emph{(ii)} a set of \emph{scaling coefficients} $\{a_n\}_{n\in [N]}$ with $a_n \in \Rbb$.
%
Then, the output $f_{\Wsf}(\xv)$ is obtained as 
\ea{
f_{\Wsf}(\xv) = \f 1 N \sum_{n \in [N]} a_n \sigma( 
\langle \wv_n, \xv \rangle
),
\label{eq:out original}
}
where $\sigma$ is the activation function.
%
The \emph{compressed} network $\Vsf$ is specified by the weights $\{\vv_m \}_{m\in [M]}$ with $\vv_m \in \Scal^{d-1}$ and scaling coefficients  $\{b_m \}_{m\in [M]}$, so that the output $f_{\Vsf}(\xv)$ is obtained as
\ea{
f_{\Vsf}(\xv) = \f 1 M \sum_{m \in [M]} b_m \sigma(
\langle \vv_m, \xv \rangle
).
\label{eq:pruned dnn}
}
As we allow for the coefficients $\{a_n\}_{n\in [N]}$ and $\{b_m \}_{m\in [M]}$, the assumption that $\wv_n, \vv_m \in \Scal^{d-1}$ is not particularly restrictive. Furthermore, if we assume that the activation function is ReLU ($\sigma(x)=\max(x, 0)$) then, by using the homogeneity of $\sigma$, one can incorporate the norm of the weights $\wv_n$ and $\vv_m$ into the coefficients $a_n$ and $b_m$, respectively.

Our goal is to compress \eqref{eq:out original} into \eqref{eq:pruned dnn} so that the expected $L_2$ loss between the two outputs when the input is Gaussian, i.e., 
\ea{
L(\Vsf,\Wsf) =  \Ebb_{\Xv\sim \Ncal(\zerov, \Iv)}  \lsb \|  f_{\Wsf}(\Xv)-f_{\Vsf}(\Xv) \|_2^2  \rsb,
\label{eq:arg inf}
}
is minimized over the choice of $\{b_m, \vv_m \}_{m\in [M]}$. We remark that, in the theoretical literature, it is common to assume that the input data is Gaussian, see e.g.~\cite[Remark 2]{marcotensor}.
Furthermore, the average $L_2$ distance of two functions on a Gaussian input, as in~\eqref{eq:arg inf}, can be regarded as a proxy for understanding how close they behave on normalized (i.e. whitened) data.


A compact formulation of the expression in \eqref{eq:arg inf} using vector notation can be achieved by defining
\eas{
g(\uv,\vv) &= \Ebb_{\Xv \sim \normal(\zerov, \Iv)} [\sigma(\uv^{\intercal} \Xv) \sigma(\vv^{\intercal} \Xv)],\label{eq:define g}
\\
\lsb G_{WW} \rsb_{ij} & = g(\wv_i, \wv_j), \quad \lsb G_{VV} \rsb_{ij}  = g(\vv_i, \vv_j) ,\\
 \lsb G_{VW} \rsb_{ij}  & = g (\vv_i, \wv_j).
}{\label{eq:vec}}
Similarly, we define the column vectors $\av$, $\bv$ as  $[\av]_n=a_n$ $[\bv]_m=b_m$.
Finally, defining 
$\sv := \frac{1}{N} G_{VW} \av$,
\eqref{eq:arg inf} becomes
\ea{
L(\Vsf,\Wsf) &= \f 1 {N^2} \av^{\intercal} G_{WW} \av- \f 2 {M} \bv^{\intercal} \sv 
+ \f 1 {M^2} \bv^{\intercal} G_{VV} \bv.
\label{eq:vec loss}
 }
The entries of $G_{VV}, G_{WW}$, and $G_{VW}$ can be interpreted as correlations between neurons, and $[\sv]_i$ can be interpreted as the average correlation between the $i^{\rm th}$ neuron in the compressed network and all neurons in the target network.
%
In this problem definition, we use the term ``compression'' with the following acceptation.
The network $\Wsf$ has been trained to perform an inference task over a given dataset.
The problem at hand is to produce a smaller network $\Vsf$  approximating $\Wsf$ using only the knowledge of the network parameters.
In the remainder of the paper, we consider the following set of assumptions: 
\begin{description}
\item \underline{Assumption (A-$\xv$):} the network input $\Xv$  is Gaussian with zero mean and identity covariance. 
\item \underline{Assumption (A-$\wv$):} the target network weights $\{ \wv_n \}_{n\in [N]}$  are i.i.d. zero-mean, unit-norm (i.e., $\wv_n\in \Scal^{d-1}$) sub-Gaussian random vectors, with a sub-Gaussian norm $\|\wv_n\|_{\psi_2}\le\sigma_W/\sqrt{d}$.\footnote{The scaling $1/\sqrt{d}$ ensures that the weights belong to the unit sphere $\Scal^{d-1}$.}
\item \underline{Assumption (A-$\av$):} the scaling coefficients of the target network $\{a_n\}_{n\in [N]}$ are i.i.d. bounded random variables, i.e. $|a_n| \leq A$, and they are independent of the weights $\{ \wv_n \}_{n\in [N]}$. We also assume $\Ebb[a_n] = \mu_a \neq 0$. 
\item \underline{Assumption (A-$\bv$):} the scaling coefficients $\{b_m\}_{m\in [M]}$ are optimized in a compact set, i.e. $|b_m| \leq B$. 
\end{description}
%
 %
 
 A few remarks on the assumptions are in order. First, the quantities $\sigma_W, \mu_a, A$ and $B$ are all numerical constants (independent of $d, N, M$). Second, the assumption that $\{ \wv_n \}_{n\in [N]}$  are i.i.d. is compatible with gradient descent training in the mean-field regime \cite{mei2018mean,rotskoff2018neural,chizat2018global,sirigano2020mean}. In fact, in this regime, the weights obtained from stochastic gradient descent are close to $N$ i.i.d. particles distributed according to the solution of a certain non-linear partial differential equation \cite{mei2018mean, mei2019mean}. This perspective has been used to prove a number of properties of neural networks, including convergence \cite{mei2018mean,chizat2018global,javanmard2020analysis}, mode connectivity \cite{shevchenko2020landscape} and implicit bias \cite{shevchenko2021mean}. 

%
%


\section{Optimization of scaling coefficients}
\label{sec:Simplification of the Pruning Risk}

We begin 
by arguing that the minimization over $\{b_m\}_{m\in [M]}$ in \eqref{eq:arg inf} 
for a given $\{\vv_m\}_{m\in [M]}$ has a rather simple closed-form expression (Lemma \ref{lem:optimal be}).
%
Next, we show that the optimization of the compressed weights $\{\vv_m\}_{m \in [M]}$ 
can be approached using the Hermite expansion of $\sigma$, when the input is Gaussian. 

%
%
\begin{lemma}[Minimization over $\{b_m\}_{m\in [M]}$]
\label{lem:optimal be}
The optimal value of $\bv$ in  \eqref{eq:vec loss} for fixed $\{ \vv_m \}_{m\in [M]}$ is 
\ea{
\bv^*(G_{VV})= \f M {N} G^{+}_{VV} G_{VW} \av=M G^{+}_{VV} \sv.
\label{eq:optimal b when g fixed}
}
Furthermore, when replacing the solution of \eqref{eq:optimal b when g fixed} into  \eqref{eq:vec loss}, the loss is given by
\ea{
 \lnone L(\Vsf,\Wsf) \rabs_{\bv=\eqref{eq:optimal b when g fixed}} & \label{eq:vec loss simple}
= \f 1 {N^2} \av^{\intercal} G_{WW} \av - 
\sv^{\intercal}
G^{+}_{VV} \sv.
}
\end{lemma}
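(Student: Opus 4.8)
The plan is to treat the right-hand side of \eqref{eq:vec loss}, for fixed $\{\vv_m\}_{m\in[M]}$, as an unconstrained convex quadratic minimization in $\bv\in\R^M$ and to solve it by the first-order conditions plus completing the square. First I would record that the quadratic part has matrix $\tfrac1{M^2}G_{VV}$ and that $G_{VV}\succeq 0$, since for every $\uv\in\R^M$
\[
\uv^\intercal G_{VV}\uv=\Ebb_{\Xv\sim\normal(\zerov,\Iv)}\Big[\big(\textstyle\sum_{m\in[M]}u_m\,\sigma(\vv_m^\intercal\Xv)\big)^2\Big]\ge 0 .
\]
Hence the minimizers are exactly the solutions of the normal equations $\nabla_\bv L=0$, i.e. $G_{VV}\bv=M\sv$, and the natural candidate is $\bv^\ast:=M\,G_{VV}^{+}\sv$, which equals $\tfrac{M}{N}G_{VV}^{+}G_{VW}\av$ by the definition $\sv=\tfrac1N G_{VW}\av$; this is \eqref{eq:optimal b when g fixed}.

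The one step that needs genuine care is showing that $\bv^\ast$ really solves the normal equations when $G_{VV}$ is rank-deficient, i.e.\ that $\sv\in\mathrm{range}(G_{VV})$; for then $G_{VV}\bv^\ast=M\,G_{VV}G_{VV}^{+}\sv=M\sv$, since $G_{VV}G_{VV}^{+}$ is the orthogonal projector onto $\mathrm{range}(G_{VV})$. I would argue the inclusion as follows: by \eqref{eq:arg inf} the loss is an expected square, hence bounded below by $0$; a convex quadratic $\bv\mapsto\tfrac1{M^2}\bv^\intercal G_{VV}\bv-\tfrac2M\bv^\intercal\sv$ can be bounded below only if its linear term is orthogonal to $\ker G_{VV}$ (otherwise, moving along a kernel direction $\uv_0$ with $\uv_0^\intercal\sv\neq0$ sends the value to $-\infty$); and $\ker G_{VV}=\mathrm{range}(G_{VV})^\perp$ because $G_{VV}$ is symmetric, so $\sv\in\mathrm{range}(G_{VV})$. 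Equivalently, one can write $\sv=\Phi^{*}f_{\Wsf}$ with $\Phi:\R^M\to L^2(\normal(\zerov,\Iv))$, $\uv\mapsto\sum_m u_m\,\sigma(\vv_m^\intercal\cdot)$, note $G_{VV}=\Phi^{*}\Phi$, and use $\mathrm{range}(\Phi^{*})=\mathrm{range}(\Phi^{*}\Phi)$.

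With that in hand, I would finish by completing the square, which yields optimality and the value \eqref{eq:vec loss simple} simultaneously. Substituting $M\sv=G_{VV}\bv^\ast$ into \eqref{eq:vec loss} and regrouping,
\[
L(\Vsf,\Wsf)=\tfrac{1}{N^2}\av^\intercal G_{WW}\av-\tfrac{1}{M^2}(\bv^\ast)^\intercal G_{VV}\bv^\ast+\tfrac{1}{M^2}(\bv-\bv^\ast)^\intercal G_{VV}(\bv-\bv^\ast);
\]
then, using the Moore--Penrose identities $(G_{VV}^{+})^\intercal=G_{VV}^{+}$ and $G_{VV}G_{VV}^{+}\sv=\sv$, one gets $(\bv^\ast)^\intercal G_{VV}\bv^\ast=M(\bv^\ast)^\intercal\sv=M^2\sv^\intercal G_{VV}^{+}\sv$, so the middle term is $-\sv^\intercal G_{VV}^{+}\sv$. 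Since the last term is $\ge 0$ by positive semidefiniteness of $G_{VV}$ and vanishes at $\bv=\bv^\ast$, we conclude that $\bv^\ast$ is optimal and that the minimum equals $\tfrac{1}{N^2}\av^\intercal G_{WW}\av-\sv^\intercal G_{VV}^{+}\sv$. When $G_{VV}$ is singular, every $\bv$ differing from $\bv^\ast$ by an element of $\ker G_{VV}$ is equally optimal, and $G_{VV}^{+}$ simply selects the minimum-norm one.

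The main obstacle is thus exactly the range condition $\sv\in\mathrm{range}(G_{VV})$ needed to legitimize the pseudo-inverse manipulations; everything else is routine linear algebra. A minor caveat is that the statement drops the constraint $|b_m|\le B$ of Assumption (A-$\bv$), so it should be read as the unconstrained optimum, which also solves the constrained problem as long as $B$ is large enough for $\bv^\ast$ to be feasible (the regime of interest here).
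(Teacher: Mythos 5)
Your proposal is correct and follows essentially the same route as the paper: treat the loss as a convex quadratic in $\bv$, solve the normal equations via the Moore--Penrose pseudo-inverse, and justify this by noting that non-negativity of the loss forces $\sv\in\mathrm{range}(G_{VV})$ --- which is exactly the kernel-direction argument the paper gives in the remark immediately after the lemma. Your completing-the-square verification and the $\sv=\Phi^{*}f_{\Wsf}$, $G_{VV}=\Phi^{*}\Phi$ observation are just more detailed renderings of the same computation, so no substantive difference.
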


\begin{proof}
Given any set of unit-norm vectors $\{ \vv_m \}_{m\in [M]}$ in~\eqref{eq:vec loss}, the loss is a quadratic (hence convex) program in $\{ b_m \}_{m\in [M]}$.
Thus, the optimal value can be found by setting the gradient $\nabla_{\bv} L$ to $0$, which readily gives \eqref{eq:optimal b when g fixed}.
%
%
By plugging \eqref{eq:optimal b when g fixed} into \eqref{eq:vec loss}, we also obtain \eqref{eq:vec loss simple}. 
\end{proof}
%
%
%
We note that, since by definition the loss is $\ge 0$, $\sv$ always lies in the image of $G_{VV}$, justifying the use of the Moore-Penrose pseudo-inverse.
In fact, if there exists $\kv \in \text{kernel}(G_{VV})$ such that $\langle \kv, \sv \rangle \neq 0$, then the loss evaluated at $\bv = \eta \kv$ is 
\[
 \frac{1}{N^2} \av^{\intercal} G_{WW} \av - \eta \frac{2}{M} \kv^{\intercal} \sv + \frac{\eta^2}{M^2} \kv^{\intercal} G_{VV} \kv = C - C^{'} \eta,
\]
which can be taken to $\pm \infty$ by taking $\eta$ to $\pm \infty$ depending on the sign of $C^{'}$. 

\begin{lemma}[\cite{dutordoir2021deep}]
\label{lem:expansion}
The function $g(\uv,\vv)$ 
in \eqref{eq:define g} is a function of $\alpha=\langle \uv, \vv \rangle$, which can be
expressed as 
\ea{
g(\uv,\vv)  = g(\alpha)= \f 1 {2\pi} \lb \sqrt{1-\alpha^2}+ \al (\pi - \arccos(\alpha)\rb.
\label{eq:closed g}
}
%
Additionally, $g(\alpha)$ admits the well-defined Taylor expansion 
\ea{
g(\alpha) & =  \frac{1}{2 \pi} +  
\frac{1}{4} \alpha +
\sum_{k \geq 1 }\frac{1}{2 \pi} \frac{((2k-3)!!)^2}{(2k)!} \alpha^{2k}
\label{eq:expansion}.
}
%
%
\end{lemma}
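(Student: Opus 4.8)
The plan is to compute the Gaussian integral $g(\uv,\vv) = \Ebb[\sigma(\uv^\intercal \Xv)\sigma(\vv^\intercal \Xv)]$ for the ReLU activation $\sigma(x)=\max(x,0)$, show it depends only on $\alpha = \langle \uv,\vv\rangle$, and then expand the resulting closed form as a power series. First I would observe that, since $\uv,\vv\in\Scal^{d-1}$, the pair $(\uv^\intercal\Xv, \vv^\intercal\Xv)$ is jointly Gaussian with zero mean, unit variances, and correlation exactly $\alpha$; hence $g$ is a function of $\alpha$ alone. The quantity $\Ebb[\max(Z_1,0)\max(Z_2,0)]$ for a standard bivariate normal with correlation $\alpha$ is a classical second-moment computation — one can carry it out either by rotating to independent coordinates $Z_2 = \alpha Z_1 + \sqrt{1-\alpha^2}\,Z_1^\perp$ and integrating the conditional expectation, or by differentiating under the integral sign in $\alpha$ (Price's theorem), using $\partial_\alpha g = \Ebb[\one\{Z_1>0\}\one\{Z_2>0\}] = \tfrac{1}{2\pi}(\pi - \arccos\alpha)$, and integrating from $\alpha=0$ with $g(0)=\Ebb[\max(Z_1,0)]\Ebb[\max(Z_2,0)] = 1/(2\pi)$. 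Either route yields \eqref{eq:closed g}.

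For the Taylor expansion \eqref{eq:expansion}, I would expand each of the two terms in \eqref{eq:closed g} around $\alpha=0$. The term $\tfrac{\alpha}{2\pi}(\pi - \arccos\alpha)$ contributes $\tfrac14\alpha$ from the $\pi$ part plus $-\tfrac{\alpha}{2\pi}\arccos\alpha$; writing $\arccos\alpha = \tfrac\pi2 - \arcsin\alpha$ converts this into $\tfrac{\alpha}{2\pi}\arcsin\alpha$, and I would plug in the standard series $\arcsin\alpha = \sum_{k\ge 0}\frac{(2k)!}{4^k(k!)^2(2k+1)}\alpha^{2k+1}$. Similarly $\sqrt{1-\alpha^2} = \sum_{k\ge 0}\binom{1/2}{k}(-\alpha^2)^k$ via the binomial series. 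Combining the coefficients of $\alpha^{2k}$ from $\tfrac{1}{2\pi}\sqrt{1-\alpha^2}$ and from $\tfrac{\alpha}{2\pi}\arcsin\alpha$, and simplifying the resulting ratios of factorials into double-factorial form, should collapse to the single closed coefficient $\tfrac{1}{2\pi}\frac{((2k-3)!!)^2}{(2k)!}$ for $k\ge 1$, with all odd coefficients beyond the explicit $\tfrac14\alpha$ vanishing (as expected, since the non-polynomial part of $g$ is even in $\alpha$). I would also note the radius of convergence is $1$, which covers $\alpha=\langle\uv,\vv\rangle\in[-1,1]$ in the interior, with the boundary handled by Abel's theorem if needed, justifying the phrase ``well-defined Taylor expansion.''

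The main obstacle I anticipate is purely bookkeeping: verifying that the two families of binomial/arcsine coefficients merge into the stated compact form. Concretely, one must show that $\tfrac{1}{2\pi}\binom{1/2}{k}(-1)^k + \tfrac{1}{2\pi}\cdot\frac{(2k-2)!}{4^{k-1}((k-1)!)^2(2k-1)} = \tfrac{1}{2\pi}\frac{((2k-3)!!)^2}{(2k)!}$ for every $k\ge 1$; this is an elementary but slightly delicate manipulation of double factorials (using $(2k-1)!! = (2k)!/(2^k k!)$ and $(2k-3)!! = (2k-2)!/(2^{k-1}(k-1)!)$), and it is the one place where a sign or index-shift error would be easy to make. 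Everything else — the Gaussian moment evaluation and the rotation argument establishing dependence on $\alpha$ only — is standard. Since the lemma is attributed to \cite{dutordoir2021deep}, I would present the derivation compactly, citing that reference for the closed form and supplying the short series-matching computation.
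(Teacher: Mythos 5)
Your proposal is correct, and the coefficient-matching step you flag as the main risk does go through: using $(2k-2)!=(2k-3)!!\,2^{k-1}(k-1)!$ and $(2k)!=(2k-1)!!\,2^k k!$, both sides of your identity reduce to $(2k-3)!!/\bigl(2^k k!(2k-1)\bigr)$, so the merged coefficient is indeed $\tfrac{1}{2\pi}\tfrac{((2k-3)!!)^2}{(2k)!}$ for every $k\ge 1$ (and the check at $k=1,2,3$ gives $\tfrac12,\tfrac1{24},\tfrac1{80}$ as expected). Your route is, however, more self-contained than the paper's, which simply cites the reference for the closed form \eqref{eq:closed g} (noting a corrected factor of $1/2$) and dismisses \eqref{eq:expansion} as ``straightforward manipulations,'' with the remark that the Taylor coefficients of $g$ are the \emph{squared Hermite coefficients} of $\sigma$. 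That Hermite viewpoint is the alternative derivation: writing $\mathbb{E}[\sigma(Z_1)\sigma(Z_2)]=\sum_k \hat\sigma_k^2\,\alpha^k$ for correlated standard Gaussians gives the expansion directly from the known Hermite coefficients of ReLU, and in particular explains for free why all odd terms beyond $\tfrac14\alpha$ vanish and why every coefficient is nonnegative (which also settles convergence at $\alpha=\pm1$, since the coefficients then sum to $g(1)=\tfrac12$). Your derivation instead obtains the closed form from scratch (Price's theorem with $\partial_\alpha g=\mathbb{P}(Z_1>0,Z_2>0)=\tfrac14+\tfrac{\arcsin\alpha}{2\pi}$, integrated from $g(0)=\tfrac1{2\pi}$, which matches \eqref{eq:closed g}) and then Taylor-expands it; this buys a fully explicit, citation-free proof at the cost of the double-factorial bookkeeping, while the paper's route buys brevity and a structural explanation of the coefficients. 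Either argument is a valid proof of the lemma.
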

\begin{IEEEproof}
The expression in \eqref{eq:closed g} follows from 
\cite{dutordoir2021deep}.
\footnote{We note that the expression in \eqref{eq:closed g} has a missing factor of $1/2$, which we corrected.} 
The expression in \eqref{eq:expansion} is derived through some rather straightforward manipulations. Here, the Taylor coefficients of $g$ are the squared Hermite coefficients of $\sigma$, see \cite[Section 11.2]{o2014analysis}.
\end{IEEEproof}

We note that, for any $\sigma\in L^2(e^{-w^2/2})$\footnote{$\sigma\in L^2(e^{-w^2/2})$ means that the integral $\int_{\mathbb R} \sigma^2(w)e^{-w^2/2}{\rm d}t$ is finite, which ensures that the Taylor expansion of $g$ is well-defined.}, $g(\uv,\vv)$ is a function of $\alpha=\langle \uv, \vv \rangle$, as long as the input distribution is invariant under rotations.

\section{
Limit loss and limit objective
}
\label{sec:Concentration Bounds}

By solving the minimization over $\{b_m\}_{m\in [M]}$, we have reduced the problem to optimizing 
\eqref{eq:vec loss simple} over $\{ \vv_m \}_{m\in [M]}$. Note that the vectors $\{ \vv_m \}_{m\in [M]}$ appear both in $G^{+}_{VV}$ and in $\sv$. Next, we show that, as $N, d\gg 1$, the dependence of $\sv$ on $\{ \vv_m \}_{m\in [M]}$ can be essentially dropped. More specifically, Theorem \ref{lem:approx sv} proves that the loss \eqref{eq:vec loss} is close to the \emph{limit} loss
\begin{equation}
\tilde{L}(\Vsf,\Wsf) =
\f 1 {N^2} \av^{\intercal} G_{WW} \av- \f 2 {M}\f {\mu_a}{2\pi} \bv^{\intercal} \one 
+ \f 1 {M^2} \bv^{\intercal} G_{VV} \bv,
\label{eq:limit loss}
\end{equation}
obtained by plugging $\sv = \frac{\mu_a}{2 \pi}\one$ into \eqref{eq:vec loss}.
To prove the claim, the idea is to exploit the Taylor expansion \eqref{eq:expansion} of $g$, and show that only the zero-th order term has an impact on $\sv$ (as $N, d\gg 1$). To do so, we begin by deriving a bound on the magnitude of the linear term coming from the expansion in \eqref{eq:expansion}.

\begin{lemma}[Controlling the linear term]
\label{lem:linear app}
Let the assumptions (A-$\wv$)-(A-$\av$) hold. Then, we have that 
\[
\sup_{\vv \in \Scal^{d-1}} \labs \frac{1}{N} \sum_{n \in [N]}  a_n \langle \vv, \wv_n \rangle\rabs \leq \frac{t A\sigma_W}{\sqrt{N}},
\]
with probability at least $1 - 2 \exp(-t^2/16)$.
\end{lemma}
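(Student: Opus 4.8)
The plan is to prove this as a uniform concentration bound over the sphere via a union bound over a net combined with a pointwise sub-Gaussian tail estimate. The core object is the random function $\vv \mapsto Z(\vv) := \frac{1}{N}\sum_{n\in[N]} a_n \langle \vv, \wv_n\rangle = \langle \vv, \frac{1}{N}\sum_n a_n \wv_n\rangle$, so in fact $\sup_{\vv\in\Scal^{d-1}} |Z(\vv)| = \|\frac{1}{N}\sum_n a_n \wv_n\|_2$, and it suffices to control the norm of the random vector $\uv := \frac{1}{N}\sum_{n\in[N]} a_n \wv_n$.

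First I would observe that $\uv$ is a sum of $N$ independent, zero-mean random vectors (zero-mean because $\wv_n$ is zero-mean and independent of $a_n$ by (A-$\wv$) and (A-$\av$)), each of which is sub-Gaussian: the vector $a_n\wv_n$ has sub-Gaussian norm at most $A\,\sigma_W/\sqrt d$ since $|a_n|\le A$ and $\|\wv_n\|_{\psi_2}\le\sigma_W/\sqrt d$. Fix a unit vector $\vv$; then $\langle \vv, a_n\wv_n\rangle$ are independent mean-zero scalars with $\|\langle\vv,a_n\wv_n\rangle\|_{\psi_2}\le A\sigma_W/\sqrt d$, so by the general Hoeffding inequality for sub-Gaussian sums, $\frac1N\sum_n \langle\vv,a_n\wv_n\rangle$ is sub-Gaussian with norm at most $CA\sigma_W/\sqrt{Nd}$ for an absolute constant $C$. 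This already gives a pointwise tail $\Prob(|Z(\vv)|\ge s)\le 2\exp(-c N d\, s^2/(A\sigma_W)^2)$. The cleanest route to the supremum is then to note that $\uv$ itself is a sub-Gaussian random vector in $\R^d$ with $\|\uv\|_{\psi_2}\le CA\sigma_W/\sqrt{Nd}$, and apply the standard bound on the norm of a sub-Gaussian vector (e.g. Vershynin, \emph{High-Dimensional Probability}, Theorem 3.1.1 / Exercise 6.3.5), which yields $\|\uv\|_2 \le CA\sigma_W\sqrt{d}/\sqrt{Nd} \cdot (1 + \text{deviation})$; since there are $d$ coordinate-like directions, the typical size is $\sqrt d$ times the per-direction scale, giving $\|\uv\|_2 \lesssim A\sigma_W/\sqrt N$ with the stated Gaussian-type tail. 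Alternatively, and more elementarily, I would take a $1/2$-net $\mathcal{N}$ of $\Scal^{d-1}$ of cardinality at most $5^d$, apply the pointwise bound at each net point with a union bound, and then use the standard net-to-sup comparison $\sup_{\vv}|Z(\vv)| \le 2\max_{\vv\in\mathcal N}|Z(\vv)|$ (valid because $Z$ is linear in $\vv$); choosing the deviation parameter to absorb the $\log|\mathcal N|=d\log 5$ entropy term into the exponent produces exactly a bound of the form $tA\sigma_W/\sqrt N$ with failure probability $2\exp(-t^2/16)$ after tracking constants.

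The main obstacle is purely bookkeeping: matching the absolute constants so that the final tail is precisely $2\exp(-t^2/16)$ with the threshold $tA\sigma_W/\sqrt N$ (no stray $\sqrt d$), which requires being careful that the net entropy $d\log 5$ is dominated and that the $1/2$-net comparison factor of $2$ and the Hoeffding constant combine correctly; there is no conceptual difficulty, only the need to choose the net radius and the relation between $t$ and the per-point deviation so everything lines up. A secondary subtlety worth stating explicitly is that the bound is dimension-free in the sense that the $\sqrt d$ from the sub-Gaussian norm scaling $1/\sqrt d$ cancels against the $\sqrt d$ from the vector-norm concentration (or from the net entropy), which is exactly why the normalization $\|\wv_n\|_{\psi_2}\le\sigma_W/\sqrt d$ in (A-$\wv$) was chosen.
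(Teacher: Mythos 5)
Your proposal follows essentially the same route as the paper: rewrite the supremum as $\frac{1}{N}\bigl\| \sum_{n} a_n \wv_n \bigr\|_2$, show the averaged vector is sub-Gaussian with norm $O(A\sigma_W/\sqrt{Nd})$ (the paper does this by a direct MGF bound, you by general Hoeffding, which is the same computation), and then invoke a norm-concentration result for sub-Gaussian vectors — the paper cites Lemma 1 of Jin et al.\ (which carries the explicit constant $16$), while you cite Vershynin or a net argument, a difference only in the reference used and the constant bookkeeping you already flag. The argument is correct and complete in substance.
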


\begin{proof}
We first note that
\[
\sup_{\vv \in \Scal^{d-1}} \labs \frac{1}{N} \sum_{n\in [N]} a_n \langle \vv, \wv_n \rangle \rabs = \frac{1}{N} \left\| \dsum_{n\in [N]} a_n \wv_n \right\|_2.
\] 
Since $\{\wv_n\}_{n\in [N]}$ are zero-mean sub-Gaussian vectors with norm $\sigma_W/\sqrt{d}$, $\wv_{\rm avg} := \f 1 N \sum_{n\in [N]} a_n \wv_n$ is also a sub-Gaussian random vector with sub-Gaussian norm $|| \wv_{\rm avg}||_{\psi_2} \leq A \sigma_W/(\sqrt{N d})$.
This can be seen by the following bound on the moment generating function: for any $\uv \in \Scal^{d-1}$, it holds that, for any $\lambda\in\mathbb R$, 
\ea{
\Ebb \lsb  e^{\lambda\uv^{\intercal} \wv_{\rm avg}} \rsb 
& = \prod_{n\in [N]} \Ebb_{a_n, \wv_n} \lsb e^{\lambda\f 1 N a_n \uv^{\intercal} \wv_n}   \rsb \\
& \leq  \prod_{n\in [N]} \Ebb_{a_n} \lsb e^{\lambda^2\f {a^2_n} {N^2} \f {\sigma^2_W} {d} }  \rsb \leq e^{\lambda^2\f {A^2} {N} \f {\sigma^2_W} {d}},
\nonumber
}
where in the first inequality we use $|| \wv_n||_{\psi_2} \leq \sigma_W/\sqrt{d}$ and Proposition 2.5.2 in \cite{vershynin2018high}, and in the second inequality we use that $|a_n|\le A$.
By \cite[Lemma 1]{jin2019short}, we have that
\ea{
\Prob \lsb || \wv_{\rm avg}||_2 \geq \f {tA\sigma_W} {\sqrt{N}} \rsb \leq 2 e^{-\frac{t^2}{16}},
}
which gives the desired result.
\end{proof}

%
Next, we bound the quadratic term of the expansion in \eqref{eq:expansion}.

\begin{lemma} [Controlling the quadratic term]
\label{lem:sum squared}
Let the assumptions (A-$\wv$)-(A-$\av$) hold. Then, we have that 
\begin{equation}
\begin{split}
\sup_{\vv \in \Scal^{d-1}}& \f 1 N \sum_{n\in [N]} |a_n| \langle \vv, \wv_n \rangle^{2} \le  C A \lb\f {\sigma_W^2} {d} + \f 1 N +\f {1} {\sqrt{Nd}}\rb,
\label{eq:sum squared}
\end{split}
\end{equation}
with probability at least $1 - 2 \exp(-d)$, where $C$ is a numerical constant.

\end{lemma}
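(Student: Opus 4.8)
The plan is to rewrite the left-hand side as an operator norm and then combine an $\epsilon$-net argument over $\Scal^{d-1}$ with Bernstein's inequality for sums of independent sub-exponential random variables. The mean of the relevant quadratic form will produce the $\sigma_W^2/d$ term, while the two regimes of Bernstein's inequality will produce the $1/\sqrt{Nd}$ and $1/N$ terms respectively.

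First I would observe that $\frac1N\sum_{n\in[N]}|a_n|\langle\vv,\wv_n\rangle^2 = \vv^\intercal M\vv$ where $M := \frac1N\sum_{n\in[N]}|a_n|\wv_n\wv_n^\intercal\succeq 0$, so that $\sup_{\vv\in\Scal^{d-1}}\vv^\intercal M\vv = \|M\|_{\rm op}$. Fixing a $1/4$-net $\mathcal N$ of $\Scal^{d-1}$ with $|\mathcal N|\le 9^d$, the standard bound for quadratic forms of symmetric matrices gives $\|M\|_{\rm op}\le 2\max_{\vv\in\mathcal N}\vv^\intercal M\vv$, so it suffices to control $\vv^\intercal M\vv$ for a single fixed $\vv$ and then union bound over the net.

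For a fixed $\vv\in\Scal^{d-1}$, assumption (A-$\wv$) gives that $\langle\vv,\wv_n\rangle$ is sub-Gaussian with norm at most $\sigma_W/\sqrt d$, hence $\langle\vv,\wv_n\rangle^2$ is sub-exponential with $\psi_1$-norm at most $\sigma_W^2/d$; since $|a_n|\le A$ by (A-$\av$), the summands $Z_n := |a_n|\langle\vv,\wv_n\rangle^2$ are independent, nonnegative, and sub-exponential with $\|Z_n\|_{\psi_1}\le K := A\sigma_W^2/d$. Using the independence of $\{a_n\}$ and $\{\wv_n\}$ together with the bound $\|\Ebb[\wv_n\wv_n^\intercal]\|_{\rm op}\le C\|\wv_n\|_{\psi_2}^2$ (the covariance operator norm of a sub-Gaussian vector is at most a constant times the square of its sub-Gaussian norm), the common mean satisfies $\Ebb Z_n \le A\,\vv^\intercal\Ebb[\wv_n\wv_n^\intercal]\vv\le CA\sigma_W^2/d$. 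Applying Bernstein's inequality for sub-exponentials to $\frac1N\sum_{n}(Z_n-\Ebb Z_n)$ yields, for any $u>0$, a deviation probability at most $2\exp\left(-cN\min(u^2/K^2,\,u/K)\right)$. Choosing $u = C'\left(K\sqrt{d/N}+Kd/N\right) = C'\left(A\sigma_W^2/\sqrt{Nd}+A\sigma_W^2/N\right)$ with $C'$ a sufficiently large numerical constant makes this probability at most $2\exp(-(1+\ln 9)d)$.

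Union bounding over the $9^d$ points of $\mathcal N$ then gives $\max_{\vv\in\mathcal N}\vv^\intercal M\vv\le CA\sigma_W^2/d + u$ with probability at least $1-2e^{-d}$; multiplying by the factor $2$ from the net step and absorbing $\sigma_W^2$ and all numerical constants into a single $C$ gives the claimed inequality. The delicate points are (i) selecting the net radius and the constant $C'$ so that the per-point failure probability beats the net cardinality $9^d$ — this is exactly what forces the $(1+\ln 9)d$ exponent — and (ii) tracking which regime of Bernstein's inequality supplies which term, with the Gaussian regime $u^2/K^2$ giving the $1/\sqrt{Nd}$ contribution and the exponential regime $u/K$ giving the $1/N$ contribution. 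I expect this bookkeeping to be the main obstacle: a direct application of matrix Bernstein would only give the weaker fluctuation $A\left(\sigma_W/\sqrt N + d/N\right)$, so it is precisely the net-plus-scalar-Bernstein route that delivers the sharp $1/\sqrt{Nd}$ scaling.
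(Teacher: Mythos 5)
Your proof is correct and takes essentially the same route as the paper: both reduce the supremum to the operator norm of the empirical second-moment matrix built from the sub-Gaussian weights and then apply sub-Gaussian covariance concentration, the only difference being that the paper invokes Vershynin's Remark 5.40 as a black box (bounding $\|\boldsymbol{\Sigma}\|_{\rm op}\le C\sigma_W^2$ separately) whereas you re-derive that concentration step by hand via a $1/4$-net, scalar Bernstein for the sub-exponential variables $|a_n|\langle \vv,\wv_n\rangle^2$, and a union bound tuned to beat the $9^d$ net cardinality. The only cosmetic discrepancy is that your final bound carries $\sigma_W^2$ on all three terms instead of only on the $1/d$ term, which is harmless since $\sigma_W$ is a numerical constant (the constants in the result the paper cites also depend on $\sigma_W$).
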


\begin{proof}
Manipulating the LHS of \eqref{eq:sum squared}, we have
\begin{equation}
    \begin{split}
 &\sup_{\vv \in \Scal^{d-1}}\f 1 N \sum_{n\in [N]} |a_n| \langle \vv, \wv_n \rangle^{2} \le \sup_{\vv \in \Scal^{d-1}}\f 1 N \sum_{n\in [N]} A \langle \vv, \wv_n \rangle^{2} \\
   & \hspace{1em}\leq\sup_{\vv \in \Scal^{d-1}} A \labs \f 1 N \vv^{\intercal} \sum_{n\in [N]}\wv_n \wv_n^{\intercal} \vv \rabs   \leq \f A d \left\|\f 1 N\Wv\Wv^\intercal\right\|_{\rm op},
\end{split}
\end{equation}
where in the first inequality we use that $|a_n|\le A$, and in the last line we have defined $\Wv := [\sqrt{d}\wv_1, \ldots, \sqrt{d}\wv_N]$. 

To bound the operator norm of $\f 1 N \Wv\Wv^\intercal$, let $\boldsymbol \Sigma$ be the second moment matrix of $\sqrt{d}\wv_n$, i.e., $\boldsymbol \Sigma:= d\mathbb E[\wv_n\wv_n^\intercal]$. Then, by  \cite[Remark 5.40]{vershynin2011introduction}, we have that 
\begin{equation}\label{eq:bds0}
    \left\|\f 1 N\Wv\Wv^\intercal\right\|_{\rm op}\le \left\|\boldsymbol \Sigma\right\|_{\rm op}+\max(\delta, \delta^2),
\end{equation}
with probability at least $1-2\exp(-c_1 t^2)$, where $\delta= C_1\sqrt{d/N}+t/\sqrt{N}$ and $c_1, C_1$ are numerical constants. To bound $\left\|\boldsymbol \Sigma\right\|_{\rm op}$, let $\bbu$ be the unitary eigenvector associated to the maximum eigenvalue of $\boldsymbol \Sigma$. Then,
	\begin{equation}\label{eq:int1}
		\left\|\boldsymbol \Sigma\right\|_{\rm op} = \bbu^\intercal \boldsymbol \Sigma \bbu = d\,\mathbb E \left[ \bbu^\intercal \wv_n\wv_n^\intercal \bbu \right]=d\,\mathbb E \left[ (\bbu^\intercal \wv_n)^2 \right].
	\end{equation}
	Furthermore, we have that
	\begin{equation}\label{eq:int2}
	\begin{split}
		\|\wv_n\|_{\psi_2} &:= \sup_{\bbu' \,:\, \|\bbu'\|_2 = 1} \|(\bbu')^\intercal \wv_n\|_{\psi_2}  \\
		&\geq \| \bbu^\intercal \wv_n\|_{\psi_2}\geq \frac{1}{C_2} \sqrt{ \mathbb E \left[ (\bbu^\top \wv_n)^2 \right]},
	\end{split}
	\end{equation}
	where $C_2$ is a numerical constant, and the last inequality comes from Eq. (2.15) of \cite{vershynin2018high}. By combining \eqref{eq:int1} and \eqref{eq:int2} with the fact that $\|\wv_n\|_{\psi_2}\le \sigma_W/\sqrt{d}$, we conclude that 
	\begin{equation}\label{eq:bds}
	    \left\|\boldsymbol \Sigma\right\|_{\rm op}\le C_2^2\sigma_W^2.
	\end{equation}
Hence, the desired result follows by taking $t=d/\sqrt{c_1}$ in \eqref{eq:bds0} and using \eqref{eq:bds}.
%
\end{proof}
%
%
%
%
%
%
At this point, we are ready to prove our main result. 

\begin{theorem}[Convergence to limit loss]
\label{lem:approx sv}
Let the assumptions (A-$\wv$)-(A-$\av$) hold. Then, we have that 
\begin{equation}
\begin{split}
\sup_{\substack{\{\vv_m\}_{m \in [M]} \\ \vv_m \in \Scal^{d-1}}}
\labs [\sv]_i - \f {\mu_a} {2 \pi} \rabs  
&\le {\sf err}(N, d, A, \sigma_W, t),
\label{eq:approx sv}
\end{split}
\end{equation}
with probability at least $1 -  4 e^{-t^2/16}-2e^{-d}$, where 
\begin{equation}
\begin{split}
&{\sf err}(N, d, A, \sigma_W, t) 
:=  C A \bigg(\f {\sigma_W^2} {d} + \f {t(1+\sigma_W)}{\sqrt{N}} \bigg),
\label{eq:approx l}
\end{split}
\end{equation}
and $C$ is a numerical constant. Furthermore, by assuming also (A-$\bv$), we have that, with the same probability, 
 \ea{
\sup_{\substack{\{\vv_m\}_{m \in [M]} \\ \vv_m \in \Scal^{d-1}}}
|L(\Vsf,\Wsf)-\tilde{L}(\Vsf,\Wsf)| \le B\,M\,{\sf err}(N, d, A, \sigma_W, t),\label{eq:diffloss}
 }
 where $L(\Vsf,\Wsf)$ and $\tilde{L}(\Vsf,\Wsf)$ are defined in \eqref{eq:vec loss} and \eqref{eq:limit loss}, respectively.
\end{theorem}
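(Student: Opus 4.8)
The plan is to prove the two bounds in sequence, since the loss estimate \eqref{eq:diffloss} follows from the entrywise estimate \eqref{eq:approx sv} by a short algebraic argument. For the entrywise bound, fix an index $i$ and a choice of $\vv_i \in \Scal^{d-1}$; by definition $[\sv]_i = \f 1 N \sum_{n\in[N]} a_n\, g(\langle \vv_i, \wv_n\rangle)$. I would insert the Taylor expansion \eqref{eq:expansion} of $g$ from Lemma \ref{lem:expansion}: the constant term $\f 1{2\pi}$ contributes $\f{\mu_a}{2\pi}$ in expectation, the linear term contributes $\f 14 \cdot \f 1 N \sum_n a_n \langle \vv_i, \wv_n\rangle$, and the remaining terms are all of order $\langle \vv_i, \wv_n\rangle^{2k}$ for $k\ge 1$, hence dominated in absolute value by a constant times $\langle \vv_i, \wv_n\rangle^{2}$ (using that $|\langle \vv_i, \wv_n\rangle|\le 1$ and that the Taylor coefficients are summable, which is what makes $g$ well-defined). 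This is the step where the Hermite/Taylor structure is essential.

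Concretely, I would write
\[
\Big| [\sv]_i - \f{\mu_a}{2\pi}\Big| \le \Big| \f{\mu_a}{2\pi} - \f 1 N \sum_{n\in[N]} a_n \f 1{2\pi}\Big| + \f 14 \Big| \f 1 N \sum_{n\in[N]} a_n \langle \vv_i, \wv_n\rangle\Big| + C' \f 1 N \sum_{n\in[N]} |a_n|\,\langle \vv_i, \wv_n\rangle^{2},
\]
for a numerical constant $C'$ absorbing the tail of the Taylor series. The first term is a scalar concentration of $\f 1 N\sum_n a_n$ around $\mu_a$: since the $a_n$ are i.i.d. and bounded by $A$, Hoeffding gives $\le \f{tA}{\sqrt N}$ with probability $\ge 1 - 2e^{-t^2/16}$ (matching the exponent used elsewhere in the paper). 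The second term is bounded by $\f{tA\sigma_W}{4\sqrt N}$ by Lemma \ref{lem:linear app}, with probability $\ge 1 - 2e^{-t^2/16}$, \emph{uniformly over $\vv_i \in \Scal^{d-1}$} — this is the key point, since the supremum over the compressed weights is taken care of by the uniform bound in that lemma. The third term is bounded uniformly over $\vv_i$ by Lemma \ref{lem:sum squared}, which yields $CA\big(\f{\sigma_W^2}{d} + \f 1 N + \f 1{\sqrt{Nd}}\big)$ with probability $\ge 1 - 2e^{-d}$. Combining these three estimates by a union bound (the failure probabilities add up to $4e^{-t^2/16} + 2e^{-d}$), and crudely bounding $\f 1 N + \f 1{\sqrt{Nd}} \le \f{2}{\sqrt N}$ and $\f{tA\sigma_W}{\sqrt N} + \f{t A}{\sqrt N} \le \f{2tA(1+\sigma_W)}{\sqrt N}$, the whole thing collapses into $CA\big(\f{\sigma_W^2}{d} + \f{t(1+\sigma_W)}{\sqrt N}\big)$ after relabeling the constant $C$; since none of the bounds used on $\vv_i$ depend on $i$ or on the other $\vv_m$'s, the same bound holds for $\sup_{\{\vv_m\}} \max_i |[\sv]_i - \f{\mu_a}{2\pi}|$, giving \eqref{eq:approx sv}.

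For \eqref{eq:diffloss}, compare \eqref{eq:vec loss} and \eqref{eq:limit loss}: the quadratic terms in $\bv$ and the $\av^\intercal G_{WW}\av$ terms are identical, so $L(\Vsf,\Wsf) - \tilde L(\Vsf,\Wsf) = -\f 2 M \bv^\intercal\big(\sv - \f{\mu_a}{2\pi}\one\big)$. Hence
\[
|L(\Vsf,\Wsf) - \tilde L(\Vsf,\Wsf)| = \f 2 M \Big| \sum_{m\in[M]} b_m\big([\sv]_m - \f{\mu_a}{2\pi}\big)\Big| \le \f 2 M \cdot M \cdot B \cdot \sup_i\Big|[\sv]_i - \f{\mu_a}{2\pi}\Big|,
\]
using $|b_m|\le B$ from (A-$\bv$). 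On the event of probability $\ge 1 - 4e^{-t^2/16} - 2e^{-d}$ where \eqref{eq:approx sv} holds, this is at most $2B\,{\sf err}(N,d,A,\sigma_W,t)$; absorbing the factor $2$ into the constant (or simply noting $2BM\cdot{\sf err}/M = 2B\,{\sf err} \le BM\,{\sf err}$ for $M\ge 2$, or carrying the $2$ inside $C$), we get the stated bound $B\,M\,{\sf err}(N,d,A,\sigma_W,t)$. I expect the main obstacle to be the first step: justifying uniformly that the tail $\sum_{k\ge 1}$ of the Taylor series contributes only an $O(\langle\vv_i,\wv_n\rangle^2)$ error with an explicit absolute constant, which requires checking that $\sum_{k\ge 1}\f 1{2\pi}\f{((2k-3)!!)^2}{(2k)!}$ converges (it does — these are squared Hermite coefficients of a fixed $L^2$ function, so the sum is finite — and in fact equals $g(1) - \f 1{2\pi} - \f 14 = \f 14$), and everything downstream is routine concentration plus a union bound.
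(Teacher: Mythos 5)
Your proposal follows the paper's own proof essentially step for step: the same decomposition of $[\sv]_i - \tfrac{\mu_a}{2\pi}$ into a constant, linear, and quadratically-dominated tail term via the Taylor expansion \eqref{eq:expansion}, Hoeffding for the scalar average of the $a_n$, Lemmas \ref{lem:linear app} and \ref{lem:sum squared} for the uniform (in $\vv_i$) bounds on the linear and quadratic parts, a union bound yielding $1-4e^{-t^2/16}-2e^{-d}$, and the same $|b_m|\le B$ algebra turning \eqref{eq:approx sv} into \eqref{eq:diffloss}. The only slip is a harmless numerical aside: the Taylor tail sums to $g(1)-\tfrac{1}{2\pi}-\tfrac14=\tfrac14-\tfrac{1}{2\pi}$, not $\tfrac14$, but your argument only uses its finiteness, so the proof stands as written.
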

\begin{IEEEproof}
By plugging the Taylor expansion of $g$ (cf.  \eqref{eq:expansion}) into the definition $\sv := \frac{1}{N} G_{VW} \av$, we have that, for $i\in [M]$,
\begin{equation}
\begin{split}
[\sv]_i =  \f 1 N \sum_{n \in [N]} \f {a_n} {2 \pi} &+ \f 1 N \sum_{n \in [N]} \f 1 4 a_n \langle \vv_i, \wv_n \rangle \\
&+  
\f 1 N \sum_{n \in [N]} \sum_{k\ge 1} a_n c_k\langle \vv_i, \wv_n \rangle^{2k},
\end{split}
\end{equation}
where $c_k = \frac{1}{2 \pi} \frac{((2k-3)!!)^2}{k!}$.
Hence,
\begin{equation}
    \begin{split}
    \labs [\sv]_i - \f {\mu_a} {2 \pi} \rabs &\leq   \labs \f 1 N \sum_{n \in [N]} \f {a_n} {2 \pi}  -  \f {\mu_a} {2 \pi} \rabs
    + \labs\f 1 N \sum_{n \in [N]} \f 1 4 a_n \langle \vv_i, \wv_n \rangle \rabs\\
    &\hspace{-1em} +  
    \labs \f 1 N \sum_{n \in [N]} \sum_{k \ge 1} a_n c_k\langle \vv_i, \wv_n \rangle^{2k} \rabs:= T_1 +T_2+T_3.
    \end{split}
\end{equation}
Using Hoeffding inequality for bounded variables, we get
\begin{equation}\label{eq:bd1}
    \Prob \lsb T_1 \geq \frac{tA}{2\pi\sqrt{N}} \rsb \leq 2 e^{-\f {t^2} {2}}.
\end{equation}
Furthermore, using Lemma \ref{lem:linear app}, we have 
\begin{equation}\label{eq:bd2}
\Prob \lsb \sup_{\vv_i \in \Scal^{d-1}} T_2 \geq \frac{tA\sigma_W}{4\sqrt{N}} \rsb \leq 2 e^{-\f {t^2} {16}}.
\end{equation}
For the last term, we first note that 
\begin{equation}\label{eq:bd3}
T_3 \le    \f 1 N \sum_{n \in [N]} \sum_{k \ge 1} |a_n| c_k\langle \vv_i, \wv_n \rangle^{2} \le \f C N \sum_{n \in [N]} |a_n| \langle \vv_i, \wv_n \rangle^{2},  
\end{equation}
where $C$ denotes a numerical constant. Here, in the first inequality we use that $c_k\ge 0$ for all $k\ge 1$ and that $|\langle \vv_i, \wv_n \rangle|\le 1$ since $\vv_i, \wv_n \in \Scal^{d-1}$; and in the second equality we use that $\sum_{k \ge 1} c_k$ is finite, since $\sigma\in L^2(e^{-w^2/2})$. Thus, we can use Lemma \ref{lem:sum squared} to give a uniform bound\footnote{It is uniform in the sense that it upper bounds $\sup_{\vv_i \in \Scal^{d-1}} T_3$.} on the RHS of \eqref{eq:bd3}. By combining this with \eqref{eq:bd1} and \eqref{eq:bd2}, after some simplifications we obtain \eqref{eq:approx sv}.

Recall from assumption (A-$\bv$) that $|b_m|\le B$. Then, an immediate consequence of \eqref{eq:approx sv} is that
\begin{equation}
  \sup_{\substack{\{\vv_m\}_{m \in [M]} \\ \vv_m \in \Scal^{d-1}}}
  \labs \bv^{\intercal} \sv - \bv^{\intercal}  \f {\mu_a} {2 \pi}\one\rabs\le M B{\sf err}(N, d, A, \sigma_W, t),
\end{equation}
which readily gives \eqref{eq:diffloss}.
\end{IEEEproof}

Theorem \ref{lem:approx sv} shows that, if we optimize the limit loss \eqref{eq:limit loss} instead of the original loss \eqref{eq:vec loss}, we suffer at most an additive error which scales as $1/\sqrt{N}+1/d$. 
The same result can be obtained for any $\sigma\in L^2(e^{-w^2/2})$, as long as the zero-th, first and second Hermite coefficients are non-zero. If at least one of these coefficients is zero, we still expect convergence to the limit loss, but with a different additive error term.

By the same argument of Lemma \ref{lem:optimal be}, the optimal value of $\bv$ in \eqref{eq:limit loss} for a fixed $G_{VV}$ is 
\ea{
\tilde{\bv}^*(G_{VV})= \f {M\mu_a}{2\pi} G^{+}_{VV} \one.
\label{eq:optimal b when g fixed2}
}
Furthermore, when replacing the solution of \eqref{eq:optimal b when g fixed2} into  \eqref{eq:limit loss}, the loss is given by
\ea{
 \lnone \tilde \Lsf(\Vsf,\Wsf) \rabs_{\bv=\eqref{eq:optimal b when g fixed2}} & \label{eq:vec loss simple2}
= \f 1 {N^2} \av^{\intercal} G_{WW} \av - 
\frac{\mu_a^2}{4\pi^2}\one^{\intercal}
G^{+}_{VV} \one.
}
This is equivalent to maximizing
\begin{equation}
\label{eqn:simpleobj}
\tilde \Rsf(\Vsf) =
%
\one^{\intercal} G_{VV}^{+} \one.
\end{equation}
We refer
to  \eqref{eq:vec loss simple2}
as the \emph{limit loss} and 
to $\tilde \Rsf(\Vsf)$ as the \emph{limit objective}.
Note that, in \eqref{eqn:simpleobj}, we have removed the dependency on the target network $\Wsf$, as the limit objective no longer depend on the target network. 
We also remark that our bound for the approximation gap in \eqref{eq:diffloss} worsens as $M$ increases. This is to be expected since, when $M$ scales e.g. linearly with $N$, the weights of the compressed network should depend on the realization of the target network.


\section{Equiangular Tight Frame (ETF) conjecture}
\label{sec: ETF}
We conjecture that the limit objective \eqref{eqn:simpleobj} is maximized when the weights of $\Vsf$ form an ETF, i.e., 
$\vv_i^\intercal\vv_j =-1/(M-1)$ for all 
$i \neq j$.
More precisely, we conjecture that \emph{any} ETF achieves the maximum in \eqref{eqn:simpleobj}:
optimizing  the orientation of the ETF results in a reduction of the compression loss bounded as in \eqref{eq:diffloss} -- with high probability.
%
%
%
%
%
We indicate the value of \eqref{eqn:simpleobj} when $\{\vv_m\}$ is an ETF as $\tilde \Rsf_{\rm ETF}$ and the corresponding compressed network as $\Vsf_{\rm ETF}$. We note that the assumption (A-$\bv$) is satisfied when $\bv$ is given by \eqref{eq:optimal b when g fixed2} and the weights of $\Vsf$ form an ETF. We compute the value of $\tilde \Rsf_{\rm ETF}$ in Lemma \ref{lem:etf objective}, and prove the conjecture in the $M=2$ case in Lemma \ref{lemma:M2}. 
\begin{lemma}[ETF limiting objective]
\label{lem:etf objective}
When  the weights of $\Vsf$ form an ETF, we have that
\ea{
\tilde \Rsf(\Vsf) = \tilde \Rsf_{\rm ETF}
= \frac{M}{g(1) + (M - 1) g\lb  -\frac{1}{M-1} \rb },
\label{eq:etf objective}
}
where $g(\alpha)$ is given by \eqref{eq:expansion}.
\end{lemma}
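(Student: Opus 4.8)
The plan is to exploit the highly symmetric structure of $G_{VV}$ when $\{\vv_m\}_{m\in[M]}$ is an ETF. By definition $[G_{VV}]_{ij} = g(\langle\vv_i,\vv_j\rangle)$, and for an ETF one has $\langle\vv_i,\vv_i\rangle = 1$ together with $\langle\vv_i,\vv_j\rangle = -1/(M-1)$ for $i\neq j$. Hence, abbreviating $g_0 := g(1)$ and $g_1 := g\!\left(-\tfrac{1}{M-1}\right)$, the first step is to record the identity
\[
G_{VV} = (g_0 - g_1)\,\Iv + g_1\,\one\one^\intercal,
\]
i.e.\ $G_{VV}$ is a scalar multiple of the identity plus a rank-one term; this follows immediately from Lemma~\ref{lem:expansion}.

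Next I would diagonalize. Since $\one$ is an eigenvector of $\one\one^\intercal$ with eigenvalue $M$, it is an eigenvector of $G_{VV}$ with eigenvalue
\[
\lambda := (g_0 - g_1) + M g_1 = g(1) + (M-1)\,g\!\left(-\tfrac{1}{M-1}\right),
\]
while every vector orthogonal to $\one$ is an eigenvector with eigenvalue $g_0 - g_1$ (so a pseudo-inverse is genuinely the right object if $g_0 = g_1$). To use it correctly I need $\lambda \neq 0$. This holds because $g(\alpha) = \Ebb[\sigma(U)\sigma(V)]$ for jointly Gaussian $(U,V)$ with $\sigma=\mathrm{ReLU}\ge 0$, so $g(\alpha)\ge 0$ for all $\alpha\in[-1,1]$, whereas $g(1) = \Ebb[\sigma(X)^2] = \tfrac12 > 0$ for $X\sim\Ncal(0,1)$; therefore $\lambda \ge g(1) = \tfrac12 > 0$. (Equivalently, $G_{VV}\succeq 0$ because $g$ is a positive-semidefinite kernel, and $\one^\intercal G_{VV}\one = \Ebb\big[(\sum_{m}\sigma(\vv_m^\intercal X))^2\big] > 0$, so the eigenvalue $\lambda = \tfrac1M\one^\intercal G_{VV}\one$ is strictly positive.)

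Since $\one$ is an eigenvector of $G_{VV}$ with the nonzero eigenvalue $\lambda$, it lies in the image of $G_{VV}$ and $G_{VV}^+\one = \lambda^{-1}\one$. Consequently
\[
\tilde\Rsf_{\rm ETF} = \one^\intercal G_{VV}^+\one = \frac{1}{\lambda}\,\one^\intercal\one = \frac{M}{g(1) + (M-1)\,g\!\left(-\tfrac{1}{M-1}\right)},
\]
which is exactly \eqref{eq:etf objective}. There is no real obstacle in this argument; the only point that needs a moment's care is that the Moore--Penrose pseudo-inverse acts as a true inverse along the direction $\one$, which is precisely what the strict positivity of $\lambda$ guarantees (and which simultaneously confirms that the hypothesis of Lemma~\ref{lem:optimal be} — that $\sv$, here a positive multiple of $\one$, lies in the image of $G_{VV}$ — is satisfied in this case).
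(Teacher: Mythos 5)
Your proof is correct and rests on the same key observation as the paper's: for an ETF, $G_{VV} = (g(1)-g(-\tfrac{1}{M-1}))\,\Iv + g(-\tfrac{1}{M-1})\,\one\one^\intercal$, an identity-plus-rank-one matrix. Where the paper then invokes the Sherman--Morrison formula to invert this matrix and contract with $\one$, you instead note that $\one$ is an eigenvector with eigenvalue $\lambda = g(1) + (M-1)g(-\tfrac{1}{M-1})$ and that $G_{VV}^{+}\one = \lambda^{-1}\one$ once $\lambda>0$. The two routes give the same one-line computation, but yours is marginally more careful: it works directly with the Moore--Penrose pseudo-inverse (so no invertibility of $G_{VV}$ needs to be assumed, only $\lambda\neq 0$), and it explicitly verifies $\lambda\ge g(1)=\tfrac12>0$ using the nonnegativity of $g$ for ReLU, a point the paper's Sherman--Morrison step leaves implicit (it tacitly uses $g(1)\neq g(-\tfrac{1}{M-1})$, which holds since $g$ is strictly increasing). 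Both arguments are valid; the content is essentially the same.
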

\begin{IEEEproof}
Define $c_1=g(-1/(M-1))$ and $c_2= g(1)-g(-1/(M-1))$. Then, $g(\vv_i^\intercal\vv_j)=c_1$ if $i\neq j$, and $g(\vv_i^\intercal\vv_j)=c_1+c_2$ otherwise. Hence, $G_{VV} = c_1 \one\one^\intercal + c_2 \Iv$ and, by using the Sherman-Morrison formula, 
%
\eqref{eq:etf objective} 
follows.
\end{IEEEproof}
%
%
%
%
%

\begin{lemma}[Case $M=2$]\label{lemma:M2}
For $M=2$, an ETF is the optimal solution of the optimization in \eqref{eqn:simpleobj}.
\end{lemma}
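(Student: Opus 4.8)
The plan is to reduce the $M=2$ instance of \eqref{eqn:simpleobj} to a one-dimensional problem in the scalar $\alpha := \langle \vv_1, \vv_2\rangle \in [-1,1]$. Since $\vv_1,\vv_2 \in \Scal^{d-1}$ and (for $d\ge 2$) every value $\alpha\in[-1,1]$ is realized by some pair, the matrix $G_{VV}$ is
\[
G_{VV} = \begin{pmatrix} g(1) & g(\alpha)\\ g(\alpha) & g(1)\end{pmatrix},
\]
so maximizing $\tilde\Rsf(\Vsf) = \one^\intercal G_{VV}^{+}\one$ over $\{\vv_1,\vv_2\}$ is equivalent to maximizing a function of $\alpha$ alone. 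An ETF with $M=2$ corresponds exactly to $\alpha = -1/(M-1) = -1$, i.e.\ $\vv_2 = -\vv_1$, so the goal becomes: show the maximizer is $\alpha = -1$.

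Next I would evaluate $\tilde\Rsf$ explicitly. Whenever $G_{VV}$ is nonsingular, a direct $2\times2$ inversion gives $\one^\intercal G_{VV}^{-1}\one = \tfrac{2}{g(1)+g(\alpha)}$. It remains to identify when $G_{VV}$ is singular: its determinant is $g(1)^2 - g(\alpha)^2$, so singularity means $g(\alpha) = \pm g(1)$. Using the monotonicity of $g$ established below, together with $g(1) = 1/2 > 0$ and $g(\alpha)\ge 0$, this happens only at $\alpha = 1$; there $G_{VV} = g(1)\,\one\one^\intercal$, the vector $\one$ lies in its image, $G_{VV}^{+} = \tfrac{1}{4 g(1)}\one\one^\intercal$, and $\tilde\Rsf = \tfrac{1}{g(1)} = 2$.

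The analytic heart of the argument is that $g$ is strictly increasing on $[-1,1]$. Differentiating the closed form \eqref{eq:closed g} yields
\[
g'(\alpha) = \frac{1}{2\pi}\bigl(\pi - \arccos\alpha\bigr)\ \ge\ 0 \qquad\text{on } [-1,1],
\]
with equality only at $\alpha=-1$; hence $g(\alpha)\ge g(-1)=0$ with equality iff $\alpha=-1$, and $g(1)=1/2$. Plugging this into the formula above, for $\alpha\in(-1,1)$ we get $\tfrac{2}{g(1)+g(\alpha)} < \tfrac{2}{g(1)} = 4$, while at $\alpha=-1$ the value is exactly $4$, and the degenerate case $\alpha=1$ gives only $2$. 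Thus $\tilde\Rsf(\Vsf)$ is maximized iff $\alpha = -1$, i.e.\ iff $\{\vv_1,\vv_2\}$ forms an ETF. (This is consistent with Lemma~\ref{lem:etf objective}, which at $M=2$ reads $\tilde\Rsf_{\rm ETF} = \tfrac{2}{g(1)+g(-1)} = 4$.)

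I do not anticipate a real obstacle: the argument is elementary once monotonicity of $g$ is in place, and that is a one-line derivative computation. The only points needing a little care are handling the rank-deficient case $\alpha=1$ via the pseudo-inverse rather than the $2\times2$ inverse, and observing that optimizing over unit vectors is genuinely equivalent to optimizing the scalar $\alpha$ over $[-1,1]$ (surjectivity of $(\vv_1,\vv_2)\mapsto\langle\vv_1,\vv_2\rangle$ when $d\ge 2$).
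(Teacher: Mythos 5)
Your proof is correct and follows essentially the same route as the paper: reduce $\tilde\Rsf$ to the scalar quantity $\tfrac{2}{g(1)+g(\alpha)}$ with $\alpha=\langle\vv_1,\vv_2\rangle$ and invoke the monotonicity of $g$ to place the maximizer at $\alpha=-1$, i.e.\ $\vv_2=-\vv_1$. You additionally verify monotonicity by differentiating the closed form \eqref{eq:closed g} and treat the rank-deficient case $\alpha=1$ via the pseudo-inverse, two points the paper's proof passes over silently.
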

%
%
\begin{IEEEproof}
To see this, let us rewrite \eqref{eqn:simpleobj} as 
%
\begin{multline}
2 
\one^{\intercal} 
\lsb \p{
1 & g_{12}\\
g_{12} & 1 }
\rsb^{-1}
\one = 
\frac{4 (1 - g_{12})}{1 - g_{12}^2} = \frac{4}{1 + g_{12}},
\end{multline}
where $g_{12} = 2  g(\langle \vv_1, \vv_2 \rangle)$.
Since $g$ in \eqref{eq:expansion}  is an increasing function in $\al$, the maximum of the limit objective 
happens when $\vv_1 = -\vv_2$, which is an ETF. 
\end{IEEEproof}
In the rest of the section, we provide numerical results giving a rather strong indication that the ETF maximizes 
\eqref{eqn:simpleobj}.
First, in Fig. \ref{fig:accuracy}, we compare \emph{(i)} the value of the limit objective attained by the ETF (cf. Lemma \ref{lem:etf objective}) and \emph{(ii)} the limit objective obtained after running $10^3$ iteration of gradient descent (GD), as a function of $M$. 
%
These two values coincide up to the $12^{\rm th}$ decimal place for $M$ between $5$ and $30$.
%
%
%


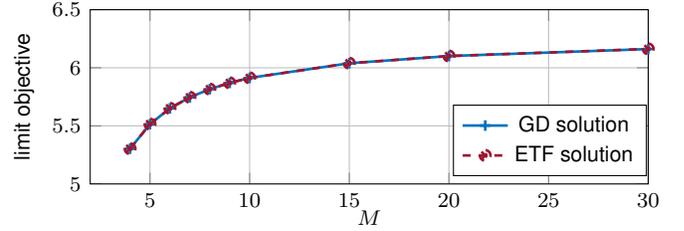
\begin{figure}[t!]
  \centering
 \begin{tikzpicture}[font=\footnotesize\sffamily]
    \definecolor{mycolor1}{rgb}{0.00000,0.44706,0.74118}%
    \definecolor{mycolor2}{rgb}{0.63529,0.07843,0.18431}%
    \definecolor{mycolor3}{rgb}{0.00000,0.49804,0.00000}%
    \begin{axis}[
      height=3.9 cm,
        width=9 cm,
         ymin=5,
         ymax=6.5,
         xmin=2, 
         xmax=30,
         x label style={at={(axis description cs:0.5,-0.1)},anchor=north},
           xmajorgrids={true},
           ymajorgrids={true},
           legend style={at={(0.65,0.25)},anchor=west},
           xlabel={$M$},
           ylabel={limit objective},
        ]
\coordinate (top) at (axis cs:1,\pgfkeysvalueof{/pgfplots/ymax});

\addplot+[mark=+,draw=mycolor1, line width=1.0pt]
    table[col sep=comma]{./data/SGDmaximas.txt};
 
 \addplot+[mark=otimes,dashed, draw=mycolor2, line width=1.0pt]
    table[col sep=comma]{./data/simplexresults.txt};

 
\legend{GD solution, ETF solution}
\end{axis}

\end{tikzpicture}
 \vspace{-0.8cm}
  \caption{Limit objective as estimated through GD and as computed at ETF.}
 \label{fig:accuracy}
 \vspace{-0.3cm}
\end{figure}

%


\begin{figure}[t!]
  \centering
 \begin{tikzpicture}[font=\footnotesize\sffamily]
    \definecolor{mycolor1}{rgb}{0.00000,0.44706,0.74118}%
    \definecolor{mycolor2}{rgb}{0.63529,0.07843,0.18431}%
    \definecolor{mycolor3}{rgb}{0.00000,0.49804,0.00000}%
    \definecolor{mycolor4}{rgb}{0.74902,0.00000,0.74902} %

    \begin{axis}[
      height=4.5 cm,
        width=9 cm,
          xmin=-5, 
          xmax=1000,
         x label style={at={(axis description cs:0.5,-0.1)},anchor=north},
           xmajorgrids={true},
           ymajorgrids={true},
           xlabel={iteration number},
           ylabel={ $d \lb \Vsf_{\sf ETF},\Vsf_{\sf GD} \rb $}
        ]

\coordinate (top) at (axis cs:1,\pgfkeysvalueof{/pgfplots/ymax});

\addplot+[ no marks,draw=mycolor1,line width=1.0pt]
    table[col sep=comma]{./data/traindyn_frobdists_M=5_iter0.txt};
 
\addplot+[ no marks,draw=mycolor2, line width=1.0pt]
    table[col sep=comma]{./data/traindyn_frobdists_M=10_iter0.txt};

\addplot+[ no marks,draw=mycolor3, line width=1.0pt]
    table[col sep=comma]{./data/traindyn_frobdists_M=15_iter0.txt};

\addplot+[ no marks,draw=mycolor4, line width=1.0pt]
    table[col sep=comma]{./data/traindyn_frobdists_M=30_iter0.txt};
 
\legend{$M=5$,$M=10$,$M=15$,$M=30$}
\end{axis}

\end{tikzpicture}
 \vspace{-0.8cm}
  \caption{Distance between the GD solution and the ETF solution, as in \eqref{eq:distance}.}
 \label{fig:frobenius}
 \vspace{-0.5cm}
\end{figure}
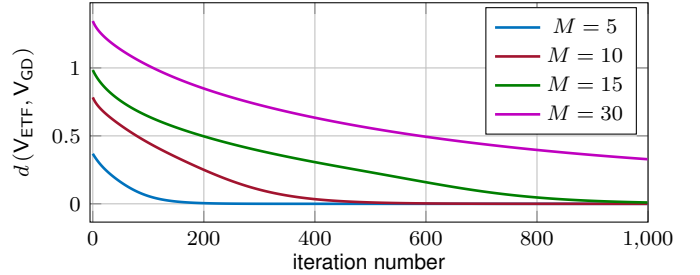

To further investigate the GD solution, let us define the distance between two networks $\Vsf$ and $\Vsf'$ as 
\ea{
 d (\Vsf, \Vsf')=   
 \| 
 G_{VV} -G_{V'V'}
 \|_{\Fsf}.
 \label{eq:distance}
}
In Fig. \ref{fig:frobenius}, we plot $d \lb \Vsf_{\sf ETF},\Vsf_{\sf GD} \rb $, i.e., the distance between the ETF compressed network, $\Vsf_{\sf ETF}$, and  the network compressed using gradient descent, $\Vsf_{\sf GD}$, as a function of the iteration number and for various values of $M$.
We note that 
the convergence of GD to the ETF solution is rather stable and constant with the number of iterations. 
We  also observe that the convergence rate of GD  decreases as $M$ grows large.

\begin{figure}
    \centering

\begin{tikzpicture}[font=\footnotesize\sffamily]
    \definecolor{mycolor1}{rgb}{0.00000,0.44706,0.74118}%
    \definecolor{mycolor2}{rgb}{0.63529,0.07843,0.18431}%
    \definecolor{mycolor3}{rgb}{0.00000,0.49804,0.00000}%
    \begin{axis}[
      height=3.9 cm,
        width=9 cm,
        xmin=-5, 
        xmax=250,
        xmajorgrids={true},
        ymajorgrids={true},
       legend style={at={(0.75,0.25)},anchor=west},
           xlabel={iteration number},
           ylabel={limit objective}
        ]
\coordinate (top) at (axis cs:1,\pgfkeysvalueof{/pgfplots/ymax});


\addplot+[name path=A,draw=mycolor1!50,dotted, no marks,forget plot]
    table[x expr=\coordindex, y=q]{./data/minmax/min_m=2.txt};

\addplot+[name path=B, draw=mycolor1!50,dotted, no marks,forget plot]
    table[x expr=\coordindex, y=q]{./data/minmax/max_m=2.txt};

\addplot[mycolor1!10, forget plot] fill between[of=A and B];

\addplot+[draw=mycolor1, no marks,line legend, line width=1.0pt]
    table[x expr=\coordindex, y=q]{./data/minmax/avg_m=2.txt};


\addplot+[name path=A,draw=mycolor2!50,dotted, no marks,forget plot]
    table[x expr=\coordindex, y=q]{./data/minmax/min_m=4.txt};

\addplot+[name path=B, draw=mycolor2!50,dotted, no marks,forget plot]
    table[x expr=\coordindex, y=q]{./data/minmax/max_m=4.txt};

\addplot[mycolor2!10, forget plot] fill between[of=A and B];

\addplot+[draw=mycolor2, no marks,line legend, line width=1.0pt]
    table[x expr=\coordindex, y=q]{./data/minmax/avg_m=4.txt};


\addplot+[name path=A,draw=mycolor3!50,dotted, no marks,forget plot]
    table[x expr=\coordindex, y=q]{./data/minmax/min_m=8.txt};

\addplot+[name path=B, draw=mycolor3!50,dotted, no marks,forget plot]
    table[x expr=\coordindex, y=q]{./data/minmax/max_m=8.txt};

\addplot[mycolor3!10, forget plot] fill between[of=A and B];

\addplot+[draw=mycolor3, no marks,line legend, line width=1.0pt]
    table[x expr=\coordindex, y=q]{./data/minmax/avg_m=8.txt};

\legend{$M=2$,$M=4$,$M=8$}
\end{axis}

\end{tikzpicture}
    \vspace{-0.8cm}
    \caption{Convergence of 
    GD as a function of the iteration number. 
The boundaries of the shaded area indicate the minimum and maximum of the objective across 10 independent initializations.}
    \vspace{-0.25cm}
    \label{fig:error_plot}
\end{figure}
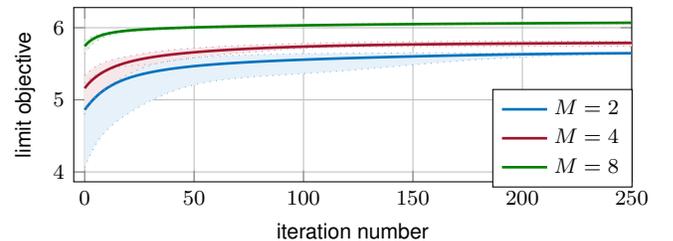

Finally, in Fig. \ref{fig:error_plot}, we plot the variability of the convergence behaviour across multiple GD initializations. 

\section{Conclusion}

In this paper, the compression of a two-layer neural network with $N$ neurons to a network with $M$ neurons is investigated. 
An asymptotic expression for the compression loss is obtained for the case in which the input features are Gaussian and the target weights are i.i.d. sub-Gaussian.
This mean-field view allow to significantly simplify the optimization problem, as the limit objective depends only on $M$ (and not on $N\gg M$).
Numerical simulations suggest that, for a ReLU activation, the optimal structure of the compressed network is the one in which weights lie on an equiangular tight frame,
while the orientation and the scaling of the neurons is obtained from the parameters of the original network. 
Proving this claim is left as an open problem. Interesting avenues for future research also consist in understanding the class of activations for which the ETF conjecture holds and extending the current results to deep networks.


\bibliographystyle{IEEEtran}
\bibliography{sample}

\end{document}